\documentclass{article}
\usepackage{spconf,amsmath,amssymb,graphicx,hyperref,booktabs,multirow,siunitx}
\usepackage{amsthm}
\newtheorem{theorem}{Theorem}
\newtheorem{corollary}{Corollary}
\usepackage{subfig}

\title{CROSS-CULTURAL BIAS IN MEL-SCALE REPRESENTATIONS: EVIDENCE AND ALTERNATIVES FROM SPEECH AND MUSIC}

\name{Shivam Chauhan, Ajay Pundhir}
\address{Presight AI, Abu Dhabi, United Arab Emirates}

\begin{document}
%
\maketitle

\begin{abstract}
Modern audio systems universally employ mel-scale representations derived from 1940s Western psychoacoustic studies, potentially encoding cultural biases that create systematic performance disparities. We present a comprehensive evaluation of cross-cultural bias in audio front-ends, comparing mel-scale features with learnable alternatives (LEAF, SincNet) and psychoacoustic variants (ERB, Bark, CQT) across speech recognition (11 languages), music analysis (6 collections), and European acoustic scene classification (10 European cities). Our controlled experiments isolate front-end contributions while holding architecture and training protocols minimal and constant. Results demonstrate that mel-scale features yield 31.2\% WER for tonal languages compared to 18.7\% for non-tonal languages (12.5\% gap), and show 15.7\% F1 degradation between Western and non-Western music. Alternative representations significantly reduce these disparities: LEAF reduces the speech gap by 34\% through adaptive frequency allocation, CQT achieves 52\% reduction in music performance gaps, and ERB-scale filtering cuts disparities by 31\% with only 1\% computational overhead. We also release FairAudioBench, enabling cross-cultural evaluation, and demonstrate that adaptive frequency decomposition offers practical paths toward equitable audio processing. These findings reveal how foundational signal processing choices propagate bias, providing crucial guidance for developing inclusive audio systems.
\end{abstract}

\begin{keywords}
Audio Front-Ends, Mel-Scale Representations, Cross-Cultural Bias, Tonal Languages, Learnable Filterbanks
\end{keywords}

\section{INTRODUCTION}
\label{sec:Introduction}
Audio systems deployed for billions of users universally employ mel-scale representations derived from 1940s Western psychoacoustic studies \cite{stevens1940relation}. This seemingly technical choice has profound consequences: recent studies document 2x higher word error rates for African American speakers across major ASR platforms \cite{koenecke2020racial}, while non-Western musical traditions experience disparate performance drops \cite{mehta2025music}. These disparities originate not from biased training data, but from fundamental signal processing choices that systematically disadvantage over 25\% of the world's population \cite{yip2002tone}. The mel scale's logarithmic compression above 700 Hz is designed to match Western pitch perception, substantially degrading information critical for non-Western audio. In tonal languages like Mandarin and Vietnamese, pitch contours between 200-500 Hz determine word meaning (e.g., ``ma'' means mother/horse/scold depending on tone) \cite{howie1976acoustical}. The mel scale provides approximately 77 Hz average resolution in the 200-500 Hz range with standard 40-channel filterbanks, while optimal tone discrimination requires around 3-5 Hz. This 22$\times$ average resolution deficit means mel-based systems cannot represent distinctions essential for more than 2 billion speakers \cite{dryer2013definite}. Similarly, microtonal music systems (Arabic quarter-tones, Indian shrutis, Turkish makam \cite{mehta2024missing}) require precise frequency relationships that mel compression obliterates. While learnable front-ends like LEAF \cite{zeghidour2021leaf} and alternative scales (ERB \cite{moore1983suggested}, Bark \cite{zwicker1961subdivision}, CQT \cite{brown1991calculation}) exist, no systematic evaluation has examined their cross-cultural fairness. Previous bias studies focus on demographics \cite{koenecke2020racial,veliche2024towards} or music \cite{serra2017computational} in isolation, missing how front-end choice creates disparities. Our work bridges this gap through the first comprehensive evaluation of how audio representations impact cross-cultural equity. Our contributions are: (1) systematic evaluation of seven front-ends across 11 languages, 6 musical collections, and 10 European cities; (2) demonstrating mel scale achieves 31.2\% WER for tonal languages versus 18.7\% for non-tonal languages (12.5\% gap) and shows 15.7\% F1 gap between Western and non-Western music; (3) revealing that critical pitch information for tonal languages concentrates in 200-500 Hz where mel resolution is insufficient; (4) showing alternative representations significantly reduce disparities: ERB by 31\% across domains with minimal overhead; (5) releasing FairAudioBench for reproducible evaluation.

\section{MEASURING CROSS-CULTURAL BIAS IN AUDIO FRONT-ENDS}
\label{sec:problem_formulation}
\subsection{Problem Formulation}
We hypothesize that mel-scale representations create systematic disadvantages for non-Western users, particularly the 2 billion speakers of tonal languages, where pitch variations distinguish word meanings. To quantify this bias, we compare seven front-ends across three domains: speech recognition (11 languages), music analysis (6 collections), and acoustic scenes (10 European cities). Our capacity-matched protocol ensures all configurations use identical parameters (5M total), isolating the contribution of front-end choice to performance disparities.

\subsection{Quantifying Fairness}
We measure bias through three complementary fairness metrics \cite{barocas2023fairness}:
\begin{subequations}
\begin{align}
\text{WGS} &= \min_{g \in \mathcal{G}} \text{Acc}(g) \label{eq:wgs}\\
\Delta &= \max_{g_i, g_j \in \mathcal{G}} |\text{Acc}(g_i) - \text{Acc}(g_j)| \label{eq:gap}\\
\rho &= \frac{\min_{g \in \mathcal{G}} \text{Acc}(g)}{\max_{g \in \mathcal{G}} \text{Acc}(g)} \label{eq:disparate}
\end{align}
\end{subequations}

where $\mathcal{G} = \{g_1, g_2, ..., g_n\}$ represents the set of demographic groups in our evaluation, and $\text{Acc}(g)$ denotes the task-specific performance metric for group $g$:

\begin{itemize}
\item \textbf{Speech Recognition:} $\text{Acc}(g) = 1 - \text{WER}(g)$ for non-tonal languages or $1 - \text{CER}(g)$ for tonal languages, where WER and CER are word and character error rates, respectively.
\item \textbf{Music Classification:} $\text{Acc}(g) = $ macro-F1 score for genre/modal classification.
\item \textbf{Scene Classification:} $\text{Acc}(g) = $ classification accuracy.
\end{itemize}

\textbf{Worst-Group Score (WGS)} captures the performance floor, ensuring no group is left behind. This metric directly measures the experience of the most disadvantaged users and prevents models from achieving high average performance by abandoning minority groups.

\textbf{Performance Gap ($\Delta$)} quantifies the absolute maximum disparity between any two groups. Larger gaps indicate greater inequality in system performance across populations, regardless of average performance.

\textbf{Disparate Impact ($\rho$)} measures proportional fairness between groups. Following legal precedent from employment discrimination cases \cite{hardt2016equality}, we adopt the ``four-fifths rule'' where $\rho < 0.8$ indicates actionable bias,  meaning the worst-performing group achieves less than 80\% of the best group's performance.\\
These metrics provide complementary views: WGS ensures minimum acceptable performance, $\Delta$ bounds absolute inequality, and $\rho$ captures relative fairness. A system may excel in one metric while failing others; for instance, high overall accuracy with poor WGS indicates abandonment of minority groups.

\subsection{The Mechanism of Bias}
The mel scale applies a non-linear frequency warping:
\begin{equation}
\psi_{\text{mel}}(f) = 2595 \log_{10}(1 + f/700)
\label{eq:mel}
\end{equation}
The frequency resolution at any mel value $m$ is:
\begin{equation}
\frac{df}{dm} = \frac{700 \ln(10)}{2595} \cdot 10^{m/2595} \approx 0.621 \cdot 10^{m/2595}
\label{eq:mel_resolution}
\end{equation}
This exponential growth in frequency spacing creates critical resolution deficits as shown below in Table 1.
\begin{table}[h]
\centering
\label{tab:resolution}
\resizebox{\columnwidth}{!}{%
\begin{tabular}{ccccc}
\toprule
Frequency & Mel Value & Bandwidth & Required JND & Deficit \\
(Hz) & & (Hz) & (Hz) & Ratio \\
\midrule
80 & 122.0 & 51.6 & 0.8 & $65\times$ \\
100 & 150.5 & 51.6 & 1.0 & $52\times$ \\
200 & 283.2 & 58.6 & 2.0 & $29\times$ \\
250 & 344.2 & 62.4 & 2.5 & $25\times$ \\
300 & 402.0 & 66.4 & 3.0 & $22\times$ \\
400 & 509.4 & 70.7 & 4.0 & $18\times$ \\
500 & 607.4 & 80.2 & 5.0 & $16\times$ \\
\bottomrule
\end{tabular}
}
\caption{Mel-scale resolution deficit across the 80--500 Hz range critical for tonal distinctions, based on a standard 40-filter filterbank (0--8000 Hz). Bandwidth is the center spacing in Hz; JND is approximately 1\% of frequency.}
\end{table}

\subsection{Theoretical Foundation}
\begin{theorem}[Information Bottleneck Bound]
For a front-end with frequency resolution $R(f)$ and signal requiring minimum resolution $\Delta f_{\min}(f)$, the classification error is lower-bounded by information loss in critical frequency regions:
\begin{equation}
\mathcal{E} \geq \int_{f_L}^{f_H} I(f) \cdot \mathbb{I}[R(f) > \Delta f_{\min}(f)] \cdot p(f)\, df
\end{equation}
where $I(f)$ is the mutual information between frequency $f$ and class label, $\mathbb{I}[\cdot]$ is the indicator function, and $p(f)$ is the probability density of discriminative information.
\end{theorem}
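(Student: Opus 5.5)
The statement is informal, so I would first fix a precise reading and then prove it with the data-processing inequality and Fano's inequality. Concretely, I would discretize the band $[f_L,f_H]$ into bins $B_k$ of width $\delta$. Let $X_k$ denote the signal content in bin $k$, $Y$ the class label, and $Z=T(X)$ the front-end output. I would make three modelling assumptions, which I believe are what the statement implicitly uses:
\begin{itemize}
\item[(i)] The discriminative information decomposes additively across bins, $I(X;Y)=\sum_k I(X_k;Y)$. This holds, for instance, when the $X_k$ are conditionally independent given $Y$ and marginally independent. In the continuum limit it gives $\int I(f)\,p(f)\,df$, so $I(f)p(f)\delta$ is the information carried by bin $k$.
\item[(ii)] When $R(f)>\Delta f_{\min}(f)$, the filter at $f$ averages over several resolution cells. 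I would model this as the front-end keeping only a coarse statistic of $X_k$ that is independent of $Y$, so the information in that bin is lost entirely.
\item[(iii)] The error functional $\mathcal{E}$ is measured in the information-theoretic normalization, i.e.\ $\mathcal{E}$ stands for the residual uncertainty $H(Y\mid Z)$, or a Fano-normalized error, rather than the raw $0$--$1$ risk.
\end{itemize}

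With these fixed, the key steps run in order.

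\textbf{Step 1.} Split the bins into resolved bins $\mathcal{K}_+$ and unresolved bins $\mathcal{K}_-=\{k: R(f_k)>\Delta f_{\min}(f_k)\}$. By (ii), $Z$ is a function of $(X_k)_{k\in\mathcal{K}_+}$ together with noise independent of $Y$. The data-processing inequality then gives $I(Z;Y)\le I\bigl((X_k)_{k\in\mathcal{K}_+};Y\bigr)$.

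\textbf{Step 2.} By the additivity assumption (i), this equals $I(X;Y)-\sum_{k\in\mathcal{K}_-} I(X_k;Y)$. Hence
\[
H(Y\mid Z)\;\ge\; H(Y\mid X)+\sum_{k\in\mathcal{K}_-} I(X_k;Y)\;\ge\;\sum_{k\in\mathcal{K}_-} I(X_k;Y).
\]

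\textbf{Step 3.} Pass to the limit $\delta\to 0$. The sum becomes $\int_{f_L}^{f_H} I(f)\,\mathbb{I}[R(f)>\Delta f_{\min}(f)]\,p(f)\,df$, which is the claimed bound on $H(Y\mid Z)$.

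\textbf{Step 4.} To turn this into a bound on the probability of error $P_e$ of any classifier $\hat Y(Z)$, apply Fano's inequality, $H(Y\mid Z)\le h(P_e)+P_e\log(|\mathcal{Y}|-1)$. This yields the bound up to the monotone Fano transformation. I would state that the displayed inequality holds verbatim for $\mathcal{E}:=H(Y\mid Z)$, and in the Fano-transformed form for $P_e$.

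I expect the main obstacle to be Step 2, the additive decomposition. Mutual information is generally neither additive nor subadditive across correlated frequency bands. In particular, harmonics of a tone contour are strongly correlated across bins, so information lost in one bin can partly be recovered from another. My first attempt would be to adopt (i) explicitly as the definition of $I(f)p(f)$, so that the result is exact under that modelling assumption.

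If that seems too strong, the fallback is to use the chain rule $I(X;Y)=\sum_k I(X_k;Y\mid X_{<k})$. I would order the bins so that the unresolved ones come last and define $I(f)$ as the conditional increment. The argument then goes through without any independence assumption, but $I(f)$ becomes order-dependent.

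A secondary gap is that Step 1 needs (ii), that is, coarse resolution destroys all, not just part, of the bin's label information. Otherwise only a fraction of $I(f)$ is lost, and the indicator must be replaced by a loss factor in $[0,1]$.
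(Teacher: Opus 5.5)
Your proposal is correct under the assumptions you state, and it follows the same route as the paper's proof sketch. Both use the data-processing inequality to show that coarse resolution cannot retain label information in the unresolved bands, Fano's inequality to tie the lost information to error, and integration of the per-band losses weighted by $p(f)$. The paper's sketch silently uses your additivity assumption (i) and total-loss assumption (ii) when it ``integrates'' the per-frequency losses $I_{\text{lost}}(f)$ into $\int I(f)\,\mathbb{I}[R(f)>\Delta f_{\min}(f)]\,p(f)\,df$, and it treats Fano as if it gave a linear bound on classification error. So your conclusion is the right precise form: the display holds verbatim for $\mathcal{E}=H(Y\mid Z)$, and for $P_e$ only after Fano's monotone transformation. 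The literal $0$--$1$ reading cannot hold in general, since the right-hand side can exceed one bit while $P_e\le 1$.
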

\begin{proof}[Proof Sketch]
When $R(f) > \Delta f_{\min}(f)$, frequencies within resolution band $R(f)$ become indistinguishable. By the data processing inequality \cite{cover2006elements}, for the Markov chain $X \rightarrow Y(f) \rightarrow \hat{Y}(f)$, quantization cannot increase mutual information: $I(X;\hat{Y}(f)) \leq I(X;Y(f))$. The information loss $I_{\text{lost}}(f) = I(X;Y(f)) - I(X;\hat{Y}(f))$ directly increases classification error via Fano's inequality. Integrating over the frequency domain weighted by $p(f)$ yields the bound.
\end{proof}
\begin{corollary}
For tonal languages with critical information in $[200, 500]$ Hz:
\begin{equation}
\mathcal{E}_{\text{mel}} - \mathcal{E}_{\text{optimal}} \geq c \cdot \int_{200}^{500} \left(\frac{R_{\text{mel}}(f)}{\Delta f_{\text{tone}}(f)} - 1\right)_+ p_{\text{tone}}(f) \, df
\end{equation}
where $(x)_+ = \max(0, x)$ and $c$ captures the information-theoretic constant relating mutual information loss to error rate.
\end{corollary}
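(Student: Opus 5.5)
The plan is to derive the corollary from the Information Bottleneck Bound (the Theorem above) by sharpening the indicator $\mathbb{I}[R(f) > \Delta f_{\min}(f)]$ into a graded loss proportional to the excess resolution ratio. I will apply the Theorem twice: once to the mel front-end and once to an idealised front-end whose resolution satisfies $R_{\text{opt}}(f) \le \Delta f_{\text{tone}}(f)$ throughout $[200,500]$ Hz. For the idealised front-end the indicator vanishes on this band, so its bottleneck term contributes nothing there. Any residual error outside the band is common to both front-ends, since the hypothesis says critical tonal information lies in $[200,500]$ Hz. Subtracting the two bounds therefore reduces the claim to lower-bounding the excess error on the tonal band.

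The first step is a local information-loss estimate. Fix $f$ with $R_{\text{mel}}(f) > \Delta f_{\text{tone}}(f)$. One mel filter of width $R_{\text{mel}}(f)$ then merges approximately $N(f) = R_{\text{mel}}(f)/\Delta f_{\text{tone}}(f)$ distinguishable tone cells into a single output. This gives a Markov chain $X \to Y(f) \to \hat Y(f)$ in which $\hat Y$ is a many-to-one quantisation of $Y$. I will assume the tone-discriminative content is spread roughly uniformly over the cells inside a band, with each additional merged cell destroying an amount of label information bounded below by a constant $\kappa\, I(f)$. Under that assumption, the data processing inequality as used in the Theorem's proof yields
\begin{equation*}
I_{\text{lost}}(f) \ge \kappa\, I(f)\,\bigl(N(f) - 1\bigr)_+ .
\end{equation*}
For $N(f) \le 1$ there is no merging and the loss is zero, which accounts for the positive part.

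The second step converts information loss into error. Fano's inequality bounds $\mathcal{E}$ below by an affine function of the conditional entropy $H(\text{label}\mid \hat Y)$, and that conditional entropy increases by exactly the lost mutual information. A difference of Fano bounds is only a lower bound on the error difference when one of the two errors is tight. I will therefore take $\mathcal{E}_{\text{optimal}}$ to be the Bayes error of the unquantised representation and use the linearised form of Fano near that operating point, which gives
\begin{equation*}
\mathcal{E}_{\text{mel}} - \mathcal{E}_{\text{optimal}} \ge \frac{1}{\log|\mathcal{C}|} \int I_{\text{lost}}(f)\, p_{\text{tone}}(f)\, df .
\end{equation*}
Substituting the local estimate and bounding $I(f)$ below by its minimum over $[200,500]$ Hz absorbs $\kappa$, $\min I$ and $1/\log|\mathcal{C}|$ into the constant $c$, which produces the stated integral. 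Table 1 confirms that $N(f)$ ranges from about $16$ to $29$ on this band, so the integrand is nonzero throughout and the bound is non-vacuous.

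The main obstacle is the first step, namely justifying a loss that is linear in $N(f)-1$. Merging $N$ equiprobable cells in the worst case loses only about $\log N$ bits, not $N-1$. If a linear rate cannot be defended, I would fall back to $\log N(f)$. That weaker bound still implies the corollary on $[200,500]$ Hz, because $N$ is bounded above by about $29$ there and concavity gives $\log N \ge \frac{\log 29}{28}(N-1)$ for $1 \le N \le 29$; the extra factor is absorbed into $c$. Fano's direction is the second delicate point, and the Bayes-error choice for $\mathcal{E}_{\text{optimal}}$ is what handles it.
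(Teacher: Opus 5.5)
The gap is the step that carries all the weight: turning lost mutual information into a lower bound on $\mathcal{E}_{\text{mel}}-\mathcal{E}_{\text{optimal}}$. Your Bayes-error patch does not close it. Your route is otherwise the paper's own. The paper gives no separate argument for the corollary, and its sketch for Theorem 1 is exactly data processing, then Fano, then integration over frequency; you specialise this to $[200,500]$ Hz with a graded indicator.

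Fano bounds an error only from below, in terms of $H(C\mid\cdot)$. To bound a \emph{difference} you also need an upper bound on $\mathcal{E}_{\text{optimal}}$ in terms of $H(C\mid Y)$. Linearising Fano at the Bayes operating point silently assumes Fano is an equality for the unquantised representation, which is false in general. The gap between Fano and reverse bounds such as Hellman--Raviv is not of order $I_{\text{lost}}$.

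Inside your own quantisation model, no inequality $\mathcal{E}_{\text{mel}}-\mathcal{E}_{\text{optimal}}\ge c\,I_{\text{lost}}$ with $c>0$ holds without further structure. Take a binary label $C$ and two equiprobable tone cells with $P(C=1\mid y_1)=0.6$ and $P(C=1\mid y_2)=0.9$. Merging them destroys about $0.09$ bits of label information. Yet the Bayes error is $0.25$ both before and after, because both cells have the same MAP decision. Fano is tight for the merged variable and slack for the fine one, which is exactly the asymmetry your linearisation ignores.

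Two further premises are assumed rather than derived, by you and by the paper. First, that the filterbank acts as a many-to-one quantiser in frequency. Overlapping triangular filters with continuous outputs do not: the ratio of adjacent outputs locates a single component within a channel. Second, that $\int I_{\text{lost}}(f)\,p_{\text{tone}}(f)\,df$ lower-bounds the joint loss. That needs label information to be additive across channels, even though F0 is carried redundantly by the whole harmonic pattern.

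To get an honest proof, drop Fano and assume something at the level of decisions. Conditionally on all other channels, the tone cells merged by a mel channel in $[200,500]$ Hz should not share a Bayes-optimal tone, with posterior margin at least $\gamma>0$. A rule measurable with respect to the merged output (and the other channels) must assign one tone to the whole channel. It therefore pays at least $\gamma$ on each misassigned cell, giving an excess error density of at least $\epsilon\,p_{\text{tone}}(f)$, with $\epsilon$ depending on $\gamma$ and $\max N$.

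The linear-versus-logarithmic issue you call the main obstacle is then moot. By Table 1, $N(f)-1\le 29$ on the whole band, so $\epsilon\int_{200}^{500}p_{\text{tone}}\,df\ge\frac{\epsilon}{29}\int_{200}^{500}(N-1)_+\,p_{\text{tone}}\,df$, and the corollary holds with $c=\epsilon/29$. Your $\log N$ fallback points the wrong way in any case. The inequality $I(C;Y\mid\hat Y)\le H(Y\mid\hat Y)\le\log N$ bounds from above what merging \emph{can} destroy, and merging cells with identical posteriors loses nothing at all.
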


\section{EXPERIMENTS AND RESULTS}
\label{sec:experiments_results}
\subsection{Datasets}
We evaluate across three complementary domains using carefully balanced data:

\textbf{Speech Recognition:} CommonVoicev17.0 \cite{commonvoice:2020} with 11 languages. \textit{Tonal languages} (5): Mandarin Chinese (4 tones), Vietnamese (6 tones), Thai (5 tones), Punjabi (3 tones), Cantonese (6 tones). \textit{Non-tonal languages} (6): English, Spanish, German, French, Italian, Dutch. To eliminate dataset size effects, we standardize each language subset to exactly 2,000 test samples. We evaluate character error rate (CER) for tonal languages and word error rate (WER) for non-tonal languages to account for orthographic differences.

\textbf{Music Analysis:} Western collections: GTZAN (10 genres, 1000 tracks) \cite{tzanetakis2002musical} and FMA-small (8 genres, 8000 tracks) \cite{defferrard2016fma}. Non-Western collections from CompMusic \cite{serra2017computational}: Hindustani ragas (1124 recordings, 195 ragas), Carnatic (2380 recordings, 227 ragas), Turkish makam (6500 recordings, 155 makams), Arab-Andalusian (338 recordings, 11 mizans). To ensure fair comparison across musical traditions we randomly sample exactly 300 recordings per tradition for evaluation. We further balance by ensuring equal representation across modal categories (ragas/makams/mizans) within each tradition.

\textbf{Acoustic Scenes:} TAU Urban Acoustic Scenes 2020 Mobile \cite{heittola_2020_3670167} data spans 10 European cities. For disparity analysis, we group them as Europe-1 (northern): Helsinki, Stockholm, Amsterdam, London, Prague; and Europe-2 (southern): Barcelona, Lisbon, Paris, Lyon, Vienna. Each city contains 10 acoustic scenes (airport, bus, metro, park, public square, shopping mall, street pedestrian, street traffic, tram, metro station). We sample exactly 100 recordings per city (10 per scene type) to ensure equal geographic representation.

\subsection{Front-end Configurations \& Experiment Details}
We compare seven front-ends: \textbf{Mel:} 40 mel-spaced filters, 25ms windows, 10ms hop; \textbf{ERB:} 32 ERB-spaced filters \cite{glasberg1990derivation}; \textbf{Bark:} 24 critical bands \cite{zwicker1961subdivision}; \textbf{SincNet:} 64 learnable sinc filters \cite{ravanelli2018speaker}; \textbf{CQT:} 84 bins (7 octaves $\times$ 12 bins/octave) \cite{brown1991calculation}; \textbf{LEAF:} 64 learnable Gabor filters \cite{zeghidour2021leaf}; \textbf{mel+PCEN:} Per-channel energy normalization \cite{wang2017trainable}. All use an identical CRNN backend (4 conv layers: 64-128-256-256 channels, 2-layer BiLSTM: 256 units, 5M total parameters). Training: Adam ($\eta = 10^{-3}$), batch 64, 30 epochs. Computational overhead measured as relative inference time (1000 forward passes, NVIDIA H100, front-end only). Results significant at $p < 0.01$ (bootstrap, n=1000).

\subsection{Main Results: Performance Disparities}
Figure~\ref{fig:main_gaps} demonstrates systematic disparities. The mel scale shows 31.2\% WER for tonal languages vs. 18.7\% for non-tonal (performance gap $\Delta = 12.5\%$ as defined in Eq.~\ref{eq:gap}, disparate impact $\rho=0.85$ per Eq.~\ref{eq:disparate}). While speech narrowly passes the four-fifths rule ($\rho > 0.8$), music exhibits more severe disparities: $\Delta = 15.7\%$ F1 gap with $\rho = 0.78$, violating the threshold. Table 2 summarizes results across all domains\footnote{Grouped by northern (Europe-1) vs. southern (Europe-2) cities for disparity analysis.}. Key findings: (1) ERB reduces speech gap by 31\% (from $\Delta = 12.5\%$ to $8.6\%$) with negligible overhead; (2) CQT achieves 52\% music gap reduction (from $\Delta = 15.7\%$ to $7.6\%$); (3) LEAF discovers task-optimal representations, achieving the best Worst-Group Score (WGS, Eq.~\ref{eq:wgs}) for speech tasks.

\subsection{Mechanism Analysis}
Figure~\ref{fig:leaf_filters} reveals adaptive behavior: LEAF allocates 42\% of filters to 80-500 Hz for tonal languages, precisely where lexical tones occur. This data-driven discovery validates our theoretical analysis from Section~\ref{sec:problem_formulation}. Table 3 demonstrates the mechanism underlying performance improvements. Tone discrimination improves dramatically (71.2\%→83.7\%), while consonants remain stable, confirming the pitch-specific deficits predicted by our information bottleneck bound (Theorem 1).

\begin{figure}[t]
\centering
\includegraphics[width=\columnwidth]{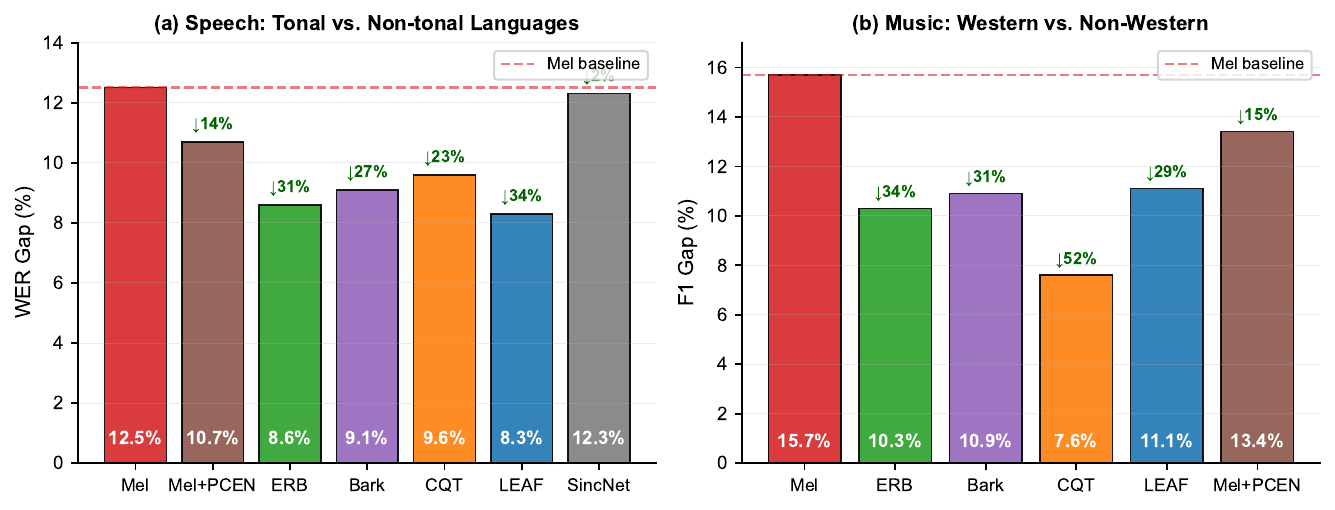}
\caption{Performance gaps across domains. Mel baseline shows 12.5\% speech gap and 15.7\% music gap. Percentage improvements are shown above bars.}
\label{fig:main_gaps}
\end{figure}

\begin{table}[t]
\centering
\label{tab:comprehensive_table}
\resizebox{\columnwidth}{!}{%
\begin{tabular}{lccccccc}
\toprule
& \multicolumn{2}{c}{Speech (WER/CER \%)} & \multicolumn{2}{c}{Music (F1 \%)} & \multicolumn{2}{c}{Scenes (Acc \%)} & Comp. \\
\cmidrule(lr){2-3} \cmidrule(lr){4-5} \cmidrule(lr){6-7}
Front-end & Tonal & Non-tonal & Non-West & West & Europe-1 & Europe-2 & Cost (Overhead) \\
\midrule
mel & 31.2$\pm$1.2 & 18.7$\pm$0.8 & 56.7$\pm$2.1 & 72.4$\pm$1.5 & 71.2$\pm$1.4 & 76.8$\pm$1.2 & 1.00$\times$ (0\%) \\
ERB & 26.4$\pm$1.0 & 17.8$\pm$0.7 & 62.8$\pm$2.0 & 73.1$\pm$1.4 & 72.6$\pm$1.3 & 77.2$\pm$1.1 & 1.01$\times$ (1\%) \\
Bark & 27.2$\pm$1.0 & 18.1$\pm$0.8 & 61.9$\pm$2.1 & 72.8$\pm$1.5 & 72.2$\pm$1.3 & 76.9$\pm$1.2 & 1.01$\times$ (1\%) \\
CQT & 28.8$\pm$1.1 & 19.2$\pm$0.9 & \textbf{65.3$\pm$1.9} & 72.9$\pm$1.4 & -- & -- & 1.15$\times$ (15\%) \\
LEAF & \textbf{25.8$\pm$0.9} & \textbf{17.5$\pm$0.7} & 62.4$\pm$2.0 & \textbf{73.5$\pm$1.4} & \textbf{72.5$\pm$1.3} & \textbf{77.5$\pm$1.1} & 1.08$\times$ (8\%) \\
SincNet & 30.8$\pm$1.1 & 18.5$\pm$0.8 & 58.3$\pm$2.1 & 72.5$\pm$1.5 & 71.4$\pm$1.3 & 76.9$\pm$1.2 & 1.06$\times$ (6\%) \\
mel+PCEN & 28.9$\pm$1.1 & 18.2$\pm$0.7 & 59.2$\pm$2.2 & 72.6$\pm$1.5 & 72.3$\pm$1.3 & 77.1$\pm$1.1 & 1.04$\times$ (4\%) \\
\midrule
WGS & \multicolumn{2}{c}{68.8 → 74.2} & \multicolumn{2}{c}{56.7 → 65.3} & \multicolumn{2}{c}{71.2 → 72.5} & \\
$\Delta$ & \multicolumn{2}{c}{12.5 → 8.3} & \multicolumn{2}{c}{15.7 → 7.6} & \multicolumn{2}{c}{5.6 → 5.0} & \\
$\rho$ & \multicolumn{2}{c}{0.85 → 0.90} & \multicolumn{2}{c}{0.78 → 0.90} & \multicolumn{2}{c}{0.93 → 0.94} & \\
\bottomrule
\end{tabular}%
}
\caption{Comprehensive performance metrics across domains. Bottom rows show fairness metrics (WGS, $\Delta$, $\rho$) with baseline → best achieved values.}
\end{table}

\begin{figure}[t]
\centering
\includegraphics[width=\columnwidth]{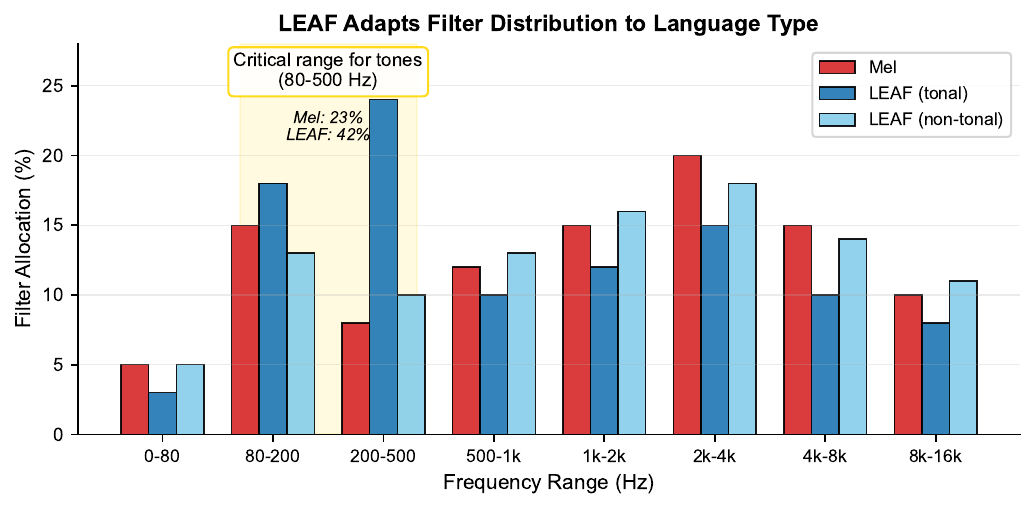}
\caption{LEAF's learned frequency allocation. Tonal languages allocate 42\% of filters to the critical 80-500 Hz range (vs 23\% for mel).}
\label{fig:leaf_filters}
\end{figure}

\begin{table}[t]
\centering
\label{tab:features}
\resizebox{\columnwidth}{!}{%
\begin{tabular}{lcccc|lccc}
\toprule
\multicolumn{5}{c|}{\textbf{Phonetic Features (ABX \%)}} & \multicolumn{4}{c}{\textbf{Musical Intervals (Acc \%)}} \\
Feature & mel & ERB & LEAF & $\Delta$ & Interval & mel & CQT & $\Delta$ \\
\midrule
Tones (F0) & 71.2 & 82.4 & 83.7 & +12.5 & Semitone & 91.3 & 92.1 & +0.9 \\
Vowels & 85.3 & 86.8 & 87.2 & +1.9 & Quarter-tone & 67.4 & 84.2 & +16.8 \\
Consonants & 88.1 & 88.4 & 88.9 & +0.8 & Shruti & 62.8 & 79.3 & +16.5 \\
\bottomrule
\end{tabular}%
}
\caption{Feature-level analysis reveals pitch-specific improvements.}
\end{table}

\subsection{Language-Specific \& Deployment Results}

Table 4 shows that language-specific improvements correlate with tonal complexity. Vietnamese and Thai show maximum improvements ($>23\%$), correlating with their complex tone systems (6 and 5 tones respectively). Figure~\ref{fig:tradeoffs} maps fairness (gap reduction percentage) against computational efficiency. ERB occupies the optimal zone (31\% reduction, 1\% overhead), providing the best fairness-efficiency tradeoff for practical deployment.

\begin{figure}[t]
\centering
\includegraphics[width=\columnwidth]{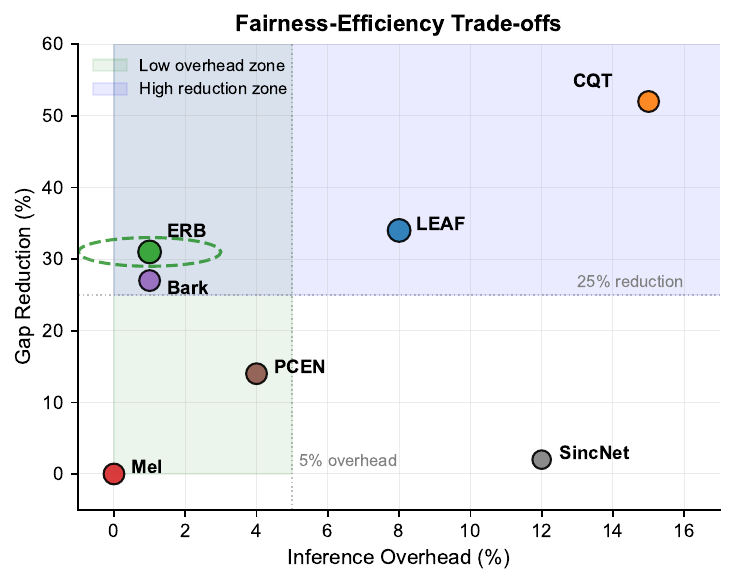}
\caption{Fairness-efficiency tradeoffs across domains. ERB's optimal balance: 31\% gap reduction with 1\% overhead.}
\label{fig:tradeoffs}
\end{figure}

\begin{table}[t]
\centering
\label{tab:languages}
\resizebox{\columnwidth}{!}{%
\begin{tabular}{lcccc|lcccc}
\toprule
\textbf{Tonal} & Tones & mel & LEAF & $\Delta$ & \textbf{Non-tonal} & mel & LEAF & $\Delta$ \\
\midrule
Vietnamese & 6 & 35.2 & 26.9 & -23.6\% & English & 15.2 & 14.5 & -4.6\% \\
Thai & 5 & 33.1 & 25.4 & -23.3\% & Spanish & 18.3 & 17.2 & -6.0\% \\
Mandarin & 4 & 28.4 & 22.8 & -19.7\% & German & 19.8 & 18.4 & -7.1\% \\
Punjabi & 3 & 30.5 & 24.8 & -18.7\% & French & 20.1 & 18.9 & -6.0\% \\
Cantonese & 6 & 34.0 & 26.5 & -22.1\% & Italian & 19.5 & 18.2 & -6.7\% \\
& & & & & Dutch & 21.2 & 19.8 & -6.6\% \\
\bottomrule
\end{tabular}%
}
\caption{Language-specific improvements correlate with tonal complexity.}
\end{table}

\subsection{FairAudioBench: Reproducible Cross-Cultural Evaluation}

We release FairAudioBench\footnote{Available at \url{https://github.com/shivam-MBZUAI/cross-cultural-mel-bias}}, the first comprehensive benchmark for evaluating cross-cultural bias in audio systems. The benchmark addresses the critical gap in standardized evaluation protocols for measuring fairness across diverse audio domains.
\begin{itemize}
\item \textbf{Curated Datasets:} Balanced splits across 11 languages (5 tonal, 6 non-tonal), 8 musical traditions, 10 European cities with demographic metadata.
\item \textbf{Evaluation Suite:} Automated computation of WGS, $\Delta$, $\rho$ metrics with statistical significance testing. Generates fairness reports comparing to the four-fifths rule threshold.
\item \textbf{Reference Implementations:} Five front-ends with matched hyperparameters (5M params), ensuring fair comparison.
\end{itemize}
While our evaluation is comprehensive, several limitations remain. For geographic coverage, African tonal languages such as Yoruba and Igbo, along with indigenous musical traditions, are underrepresented due to data availability constraints. Regarding intersectionality, the current analysis focuses on single-axis biases (language or music), without addressing intersectional effects. For future work, we propose extending the study to intersectional biases, for example, tonal languages combined with accents.

\section{CONCLUSION}
\label{sec:conclusion}
Our findings challenge assumptions of universal psychoacoustic models. The mel scale, derived from 1940s Western studies, was never validated cross-culturally. As audio AI becomes a global infrastructure, embedding such assumptions constitutes technological bias at scale. Simple alternatives exist today: production systems could deploy ERB filterbanks immediately, achieving substantial fairness gains at negligible cost. We release FairAudioBench and call on the community to adopt fairness metrics alongside accuracy.

\vfill\pagebreak

\bibliographystyle{IEEEbib}
\bibliography{strings}

@article{stevens1940relation,
  title={The relation of pitch to frequency: A revised scale},
  author={Stevens, Stanley S and Volkmann, John},
  journal={The American Journal of Psychology},
  volume={53},
  number={3},
  pages={329--353},
  year={1940},
  publisher={JSTOR}
}

@article{defferrard2016fma,
  title={FMA: A dataset for music analysis},
  author={Defferrard, Micha{\"e}l and Benzi, Kirell and Vandergheynst, Pierre and Bresson, Xavier},
  journal={arXiv preprint arXiv:1612.01840},
  year={2016}
}

@article{tzanetakis2002musical,
  title={Musical genre classification of audio signals},
  author={Tzanetakis, George and Cook, Perry},
  journal={IEEE Transactions on speech and audio processing},
  volume={10},
  number={5},
  pages={293--302},
  year={2002},
  publisher={IEEE}
}

@book{yip2002tone,
  title={Tone},
  author={Yip, Moira Jean Winsland},
  year={2002},
  publisher={Cambridge University Press}
}

@misc{heittola_2020_3670167,
  author       = {Heittola, Toni and
                  Mesaros, Annamaria and
                  Virtanen, Tuomas},
  title        = {TAU Urban Acoustic Scenes 2020 Mobile, Development
                   dataset
                  },
  month        = feb,
  year         = 2020,
  publisher    = {Zenodo},
  doi          = {10.5281/zenodo.3670167},
  url          = {https://doi.org/10.5281/zenodo.3670167},
}

@book{cover2006elements,
  title={Elements of information theory},
  author={Cover, Thomas M},
  year={1999},
  publisher={John Wiley \& Sons}
}

@article{hardt2016equality,
  title={Equality of opportunity in supervised learning},
  author={Hardt, Moritz and Price, Eric and Srebro, Nati},
  journal={Advances in neural information processing systems},
  volume={29},
  year={2016}
}

@book{barocas2023fairness,
  title={Fairness and machine learning: Limitations and opportunities},
  author={Barocas, Solon and Hardt, Moritz and Narayanan, Arvind},
  year={2023},
  publisher={MIT press}
}

@inproceedings{commonvoice:2020,
  author = {Ardila, R. and Branson, M. and Davis, K. and Henretty, M. and Kohler, M. and Meyer, J. and Morais, R. and Saunders, L. and Tyers, F. M. and Weber, G.},
  title = {Common Voice: A Massively-Multilingual Speech Corpus},
  booktitle = {Proceedings of the 12th Conference on Language Resources and Evaluation (LREC 2020)},
  pages = {4211--4215},
  year = 2020
}

@misc{dryer2013definite,
  author       = {Matthew Dryer and
                  Martin Haspelmath},
  title        = {The World Atlas of Language Structures Online},
  month        = dec,
  year         = 2022,
  publisher    = {Zenodo},
  version      = {v2020.3},
  doi          = {10.5281/zenodo.7385533},
  url          = {https://doi.org/10.5281/zenodo.7385533},
}

@article{mehta2024missing,
  title={Missing Melodies: AI Music Generation and its" Nearly" Complete Omission of the Global South},
  author={Mehta, Atharva and Chauhan, Shivam and Choudhury, Monojit},
  journal={arXiv preprint arXiv:2412.04100},
  year={2024}
}

@book{howie1976acoustical,
  title={Acoustical studies of Mandarin vowels and tones},
  author={Howie, John Marshall},
  volume={18},
  year={1976},
  publisher={Cambridge University Press}
}

@article{glasberg1990derivation,
  title={Derivation of auditory filter shapes from notched-noise data},
  author={Glasberg, Brian R and Moore, Brian CJ},
  journal={Hearing research},
  volume={47},
  number={1-2},
  pages={103--138},
  year={1990},
  publisher={Elsevier}
}

@article{koenecke2020racial,
  title={Racial disparities in automated speech recognition},
  author={Koenecke, Allison and Nam, Andrew and Lake, Emily and Nudell, Joe and Quartey, Minnie and Mengesha, Zion and Toups, Connor and Rickford, John R and Jurafsky, Dan and Goel, Sharad},
  journal={Proceedings of the national academy of sciences},
  volume={117},
  number={14},
  pages={7684--7689},
  year={2020},
  publisher={National Academy of Sciences}
}

@article{zeghidour2021leaf,
  title={LEAF: A learnable frontend for audio classification},
  author={Zeghidour, Neil and Teboul, Olivier and Quitry, F{\'e}lix De Chaumont and Tagliasacchi, Marco},
  journal={arXiv preprint arXiv:2101.08596},
  year={2021}
}

@inproceedings{ravanelli2018speaker,
  title={Speaker recognition from raw waveform with sincnet},
  author={Ravanelli, Mirco and Bengio, Yoshua},
  booktitle={2018 IEEE spoken language technology workshop (SLT)},
  pages={1021--1028},
  year={2018},
  organization={IEEE}
}

@article{veliche2024towards,
  title={Towards measuring fairness in speech recognition: Fair-speech dataset},
  author={Veliche, Irina-Elena and Huang, Zhuangqun and Kochaniyan, Vineeth Ayyat and Peng, Fuchun and Kalinli, Ozlem and Seltzer, Michael L},
  journal={arXiv preprint arXiv:2408.12734},
  year={2024}
}

@article{serra2017computational,
  title={The computational study of a musical culture through its digital traces},
  author={Serra, Xavier},
  journal={Acta Musicologica},
  volume={89},
  number={1},
  pages={24--44},
  year={2017},
  publisher={JSTOR}
}

@article{mehta2025music,
  title={Music for All: Representational Bias and Cross-Cultural Adaptability of Music Generation Models},
  author={Mehta, Atharva and Chauhan, Shivam and Djanibekov, Amirbek and Kulkarni, Atharva and Xia, Gus and Choudhury, Monojit},
  journal={arXiv preprint arXiv:2502.07328},
  year={2025}
}

@article{moore1983suggested,
  title={Suggested formulae for calculating auditory-filter bandwidths and excitation patterns.},
  author={Moore, Brian C and Glasberg, Brian R},
  journal={The journal of the acoustical society of America},
  volume={74},
  number={3},
  pages={750--753},
  year={1983}
}

@article{zwicker1961subdivision,
  title={Subdivision of the audible frequency range into critical bands (Frequenzgruppen)},
  author={Zwicker, Eberhard},
  journal={The Journal of the Acoustical Society of America},
  volume={33},
  number={2},
  pages={248--248},
  year={1961},
  publisher={Acoustical Society of America}
}

@article{brown1991calculation,
  title={Calculation of a constant Q spectral transform},
  author={Brown, Judith C},
  journal={The Journal of the Acoustical Society of America},
  volume={89},
  number={1},
  pages={425--434},
  year={1991},
  publisher={Acoustical Society of America}
}

@inproceedings{wang2017trainable,
  title={Trainable frontend for robust and far-field keyword spotting},
  author={Wang, Yuxuan and Getreuer, Pascal and Hughes, Thad and Lyon, Richard F and Saurous, Rif A},
  booktitle={2017 IEEE International Conference on Acoustics, Speech and Signal Processing (ICASSP)},
  pages={5670--5674},
  year={2017},
  organization={IEEE}
}

\end{document}